
\documentclass[11pt]{article}

\usepackage{times,graphics,psfrag,amsthm,amsmath,amssymb}

\setlength\hoffset{0in}
\setlength\voffset{0in}
\setlength{\oddsidemargin}{0in}
\setlength{\evensidemargin}{0in}
\setlength{\topmargin}{0in}
\setlength{\headheight}{0in}
\setlength{\headsep}{0in}
\setlength{\textwidth}{6.5in}
\setlength{\textheight}{9.0in}

\newtheorem{theorem}{Theorem}
\newtheorem{lemma}{Lemma}

\newcommand{\RR}{\mathbb{R}}
\newcommand{\dist}{\mathrm{dist}}

\begin{document}

\title{\bf Spreading grid cells}

\author{Minghui Jiang \quad Pedro J. Tejada\medskip\\
Department of Computer Science, Utah State University, Logan, UT 84322, USA\medskip\\
\texttt{mjiang@cc.usu.edu} \quad \texttt{p.tejada@aggiemail.usu.edu}}

\maketitle

\begin{abstract}
Let $S$ be a set of $n^2$ symbols.
Let $A$ be an $n\times n$ square grid
with each cell labeled by a distinct symbol in $S$.
Let $B$ be another $n\times n$ square grid,
also with each cell labeled by a distinct symbol in $S$.
Then each symbol in $S$ labels two cells, one in $A$ and one in $B$.
Define the \emph{combined distance} between two symbols in $S$
as the distance between the two cells in $A$
plus the distance between the two cells in $B$
that are labeled by the two symbols.
Bel\'en Palop asked the following question
at the open problems session of CCCG 2009:
How to arrange the symbols in the two grids
such that
the minimum combined distance between any two symbols is maximized?
In this paper, we give a partial answer to Bel\'en Palop's question.

Define $c_p(n) = \max_{A,B}\min_{s,t \in S}
	\{ \dist_p(A,s,t) + \dist_p(B,s,t) \}$,
where $A$ and $B$ range over all pairs of $n\times n$ square grids
labeled by the same set $S$ of $n^2$ distinct symbols,
and where $\dist_p(A,s,t)$ and $\dist_p(B,s,t)$ are the $L_p$ distances
between the cells in $A$ and in $B$, respectively,
that are labeled by the two symbols $s$ and $t$.
We present asymptotically optimal bounds
$c_p(n) = \Theta(\sqrt{n})$ for all $p=1,2,\ldots,\infty$.
The bounds also hold for generalizations to $d$-dimensional grids
for any constant $d \ge 2$.
Our proof yields
a simple linear-time constant-factor approximation algorithm
for maximizing the minimum combined distance
between any two symbols in two grids.
\end{abstract}


\section{Introduction}

Let $S$ be a set of $n^2$ symbols.
Let $A$ be an $n\times n$ square grid
with each cell labeled by a distinct symbol in $S$.
Let $B$ be another $n\times n$ square grid,
also with each cell labeled by a distinct symbol in $S$.
Then each symbol in $S$ labels two cells, one in $A$ and one in $B$.
Define the \emph{combined distance} between two symbols in $S$
as the distance between the two cells in $A$
plus the distance between the two cells in $B$
that are labeled by the two symbols.
Bel\'en Palop asked the following question
at the open problems session of CCCG 2009~\cite{DO09}:
\begin{quote}
How to arrange the symbols in the two grids
such that
the minimum combined distance between any two symbols is maximized?
\end{quote}

\begin{figure}[htbp]
\psfrag{1}{$1$}
\psfrag{2}{$2$}
\psfrag{3}{$3$}
\psfrag{4}{$4$}
\psfrag{5}{$5$}
\psfrag{6}{$6$}
\psfrag{7}{$7$}
\psfrag{8}{$8$}
\psfrag{9}{$9$}
\psfrag{A}{$A$}
\psfrag{B}{$B$}
\centering
\includegraphics{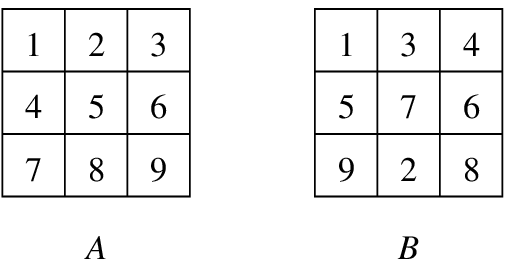}
\caption{Two $3\times 3$ grids $A$ and $B$ labeled by $S=\{1,2,3,4,5,6,7,8,9\}$.
Given the grid $A$, the grid $B$ is one of $840$ solutions
found by a computer program such that the combined $L_1$ distance
between any two symbols in the two grids is at least $3$.}
\label{fig:3x3}
\end{figure}

In the original setting of this question as posed by
Bel\'en Palop,
the two grids $A$ and $B$ are axis-parallel,
each grid cell is a unit square,
and the distance between two cells is the $L_1$ distance between
the cell centers.
Thus the distance between two cells sharing an edge is $1$,
and the distance between two cells sharing only a vertex is $2$.
We refer to Figure~\ref{fig:3x3} for an example.
Note that the question is also interesting for the other norms $L_p$,
$p=2,\ldots,\infty$, in particular, $L_\infty$.

In this paper, we give a partial answer to Bel\'en Palop's question.
To be precise, let $n \ge 2$, and define
\begin{equation}\label{eq:c}
c_p(n) = \max_{A,B}\min_{s,t \in S}
	\big\{ \dist_p(A,s,t) + \dist_p(B,s,t) \big\},
\end{equation}
where $A$ and $B$ range over all pairs of $n\times n$ square grids
labeled by the same set $S$ of $n^2$ distinct symbols,
and where $\dist_p(A,s,t)$ and $\dist_p(B,s,t)$ are the $L_p$ distances
between (the centers of) the cells in $A$ and in $B$, respectively,
that are labeled by the two symbols $s$ and $t$.
Our main result is the following theorem:

\begin{theorem}\label{T1}
For any integer $n \ge 2$,
$2\big\lfloor \sqrt{n/3} \,\big\rfloor \le c_\infty(n)
	\le \big\lceil \sqrt{n-1} \,\big\rceil
		+ \big\lfloor \sqrt{n-1} \,\big\rfloor$.
Consequently, for any integers $n \ge 2$ and $p \ge 1$,
$2\big\lfloor \sqrt{n/3} \,\big\rfloor \le c_p(n)
	\le 2^{1/p}\big( \big\lceil \sqrt{n-1} \,\big\rceil
		+ \big\lfloor \sqrt{n-1} \,\big\rfloor \big)$.
\end{theorem}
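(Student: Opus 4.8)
The plan is to prove the two $L_\infty$ bounds first and then deduce the $L_p$ bounds by a routine norm comparison. For $x\in\RR^2$ and $p\ge 1$ one has $\|x\|_\infty\le\|x\|_p\le 2^{1/p}\|x\|_\infty$, so for either grid $G$ and any symbols $s,t$, $\dist_\infty(G,s,t)\le\dist_p(G,s,t)\le 2^{1/p}\dist_\infty(G,s,t)$. The lower bound $c_p(n)\ge c_\infty(n)$ is then immediate from the $L_\infty$-optimal pair $(A,B)$. For the upper bound, given any $(A,B)$ let $(s^\ast,t^\ast)$ minimize $\dist_\infty(A,\cdot)+\dist_\infty(B,\cdot)$; then $\min_{s,t}\{\dist_p(A,s,t)+\dist_p(B,s,t)\}\le\dist_p(A,s^\ast,t^\ast)+\dist_p(B,s^\ast,t^\ast)\le 2^{1/p}c_\infty(n)$, and we maximize over $(A,B)$. (The $d$-dimensional version goes the same way with $2^{1/p}$ replaced by $d^{1/p}$.) So it remains to show $2\lfloor\sqrt{n/3}\,\rfloor\le c_\infty(n)\le\lceil\sqrt{n-1}\,\rceil+\lfloor\sqrt{n-1}\,\rfloor$.

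For the upper bound, fix any pair $(A,B)$ and let $D$ be its minimum combined $L_\infty$ distance; two horizontally adjacent cells of $A$ show $D\le n$. The key step is a window-plus-packing argument: if $1\le w\le\min(D,n)$, then a $w\times w$ block of $A$ contains $w^2$ symbols that are pairwise at $A$-distance $\le w-1$, hence pairwise at $B$-distance $\ge D-w+1$; and any subset of the $n\times n$ grid with pairwise $L_\infty$-distance $\ge g$ has at most $\lceil n/g\rceil^2$ elements (round each coordinate into blocks of length $g$), so $w\le\lceil n/(D-w+1)\rceil$. Choosing $g:=\lceil\sqrt{n-1}\,\rceil$ and $w:=D-g+1$ (admissible since $D\le n$ and $g\ge 1$; if instead $D<g$ the desired bound is trivial) yields $D\le\lceil n/g\rceil+g-1$, and an elementary computation splitting on whether $n-1$ is a perfect square gives $\lceil n/g\rceil\le\lfloor\sqrt{n-1}\,\rfloor+1$, i.e.\ $D\le\lceil\sqrt{n-1}\,\rceil+\lfloor\sqrt{n-1}\,\rfloor$.

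For the lower bound I would give an explicit construction, cleanest when $n=3k^2$. Identify each symbol with a tuple $(\tilde P,r,p,\tilde Q,s,q)$ with $\tilde P,p,\tilde Q,q\in\{0,\dots,k-1\}$ and $r,s\in\{0,1,2\}$, and place it in $A$ at cell $(3k\tilde P+kr+p,\ 3k\tilde Q+ks+q)$ and in $B$ at cell $(3kp+kr+\tilde P,\ 3kq+ks+\tilde Q)$; both are bijections by uniqueness of mixed-radix representations. Since the combined distance is at least $\max\{|\Delta A_x|+|\Delta B_x|,\ |\Delta A_y|+|\Delta B_y|\}$ and at least one of the $x$- or $y$-difference-triples is nonzero, it suffices to show that a nonzero triple $(\Delta\tilde P,\Delta r,\Delta p)$ forces $|f|+|g|\ge 2k$, where $f:=3k\,\Delta\tilde P+k\,\Delta r+\Delta p$ and $g:=3k\,\Delta p+k\,\Delta r+\Delta\tilde P$. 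Here $f-g=(3k-1)(\Delta\tilde P-\Delta p)$ and $f+g=(3k+1)(\Delta\tilde P+\Delta p)+2k\,\Delta r$, so: if $\Delta\tilde P\ne\Delta p$ then $|f|+|g|\ge|f-g|\ge 3k-1\ge 2k$; if $\Delta\tilde P=\Delta p\ne 0$ then $|f|+|g|\ge|f+g|=2|(3k+1)\Delta\tilde P+k\,\Delta r|\ge 2((3k+1)-2k)=2(k+1)$; and if $\Delta\tilde P=\Delta p=0$ then $\Delta r\ne 0$ and $|f|+|g|=2k|\Delta r|\ge 2k$.

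The main obstacle is extending the lower bound to general $n$: $k=\lfloor\sqrt{n/3}\,\rfloor$ only gives $3k^2\le n$, and the factor $2$ in $2k$ genuinely relies on the balanced outer radix ($\tilde P$ and $p$ both ranging over $\{0,\dots,k-1\}$), so an unbalanced factorization of $n$ degrades the bound. I would handle general $n$ by splitting $\{0,\dots,n-1\}$ into $3k$ consecutive intervals of near-equal length (each of length $\ge k$, since $n\ge 3k^2$) and running a padded variant of the construction above; the care needed is precisely in making the uneven end pieces interact with the rest so that no two symbols adjacent in one grid are close in the other. By contrast, the upper bound is essentially routine once the window-and-packing step is isolated, and the $L_p$ reduction is a one-liner, so the bookkeeping in the lower-bound construction is where most of the work lies.
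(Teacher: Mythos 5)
Your $L_p$ reduction and your upper bound are both correct. The upper bound is essentially the paper's argument: you take a window in $A$ whose symbols are pairwise close, observe that their images in $B$ must be pairwise far, and bound how many pairwise-far cells fit in an $n\times n$ grid; you do the last step by a discrete pigeonhole count ($\lceil n/g\rceil^2$ blocks) where the paper packs interior-disjoint $v\times v$ squares into a square of side $n-1+v$, but the two are interchangeable and your arithmetic ($D\le\lceil n/g\rceil+g-1$ with $g=\lceil\sqrt{n-1}\,\rceil$, and $\lceil n/g\rceil\le\lfloor\sqrt{n-1}\,\rfloor+1$) checks out. Your lower-bound construction for $n=3k^2$ is also correct and is genuinely different from the paper's: you permute by swapping the outer and inner base-$k$ digits of each coordinate (fixing the middle base-$3$ digit), and the identities $f-g=(3k-1)(\Delta\tilde P-\Delta p)$, $f+g=(3k+1)(\Delta\tilde P+\Delta p)+2k\,\Delta r$ do give $|f|+|g|\ge 2k$ in all cases.

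The genuine gap is that Theorem~\ref{T1} asserts the lower bound for every integer $n\ge 2$, and you prove it only when $n=3k^2$; for general $n$ you offer a plan, not a proof, and the obstacle you yourself identify is real: with $k=\lfloor\sqrt{n/3}\,\rfloor$ the outer digit $\tilde P$ ranges over $\lceil n/(3k)\rceil$ values while the inner digit $p$ ranges over at most $k$ values, so the digit swap is no longer a bijection, and an unbalanced substitute degrades the constant in $|f|+|g|\ge 2k$. This is exactly where the paper spends its effort: it abandons the digit-swap in favor of coloring cells by $x\bmod k$ and shifting each cell of color $i$ by $3ik$ with a wraparound of $\lceil n/k\rceil k$ (corrected by $+k$ when $i>(n-1)\bmod k$ so that the map stays a color-preserving bijection for arbitrary $n$), chooses $k$ as the largest integer with $3k\le\lceil n/k\rceil$, and runs a nine-case analysis plus a separate lemma showing $k\ge\lfloor\sqrt{n/3}\,\rfloor$. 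Note also that simply embedding your $3k^2$-construction into a corner of a larger grid does not work, since the leftover symbols could end up adjacent in both grids. Until the general-$n$ construction is actually carried out (your interval-splitting idea may well succeed, but the interaction of the uneven end intervals is precisely the unproven part), the stated theorem is not established.
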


Our bounds on the minimum combined distance can be generalized to
$d$-dimensional grids for any integer $d \ge 2$.
Define $c_p^d(n)$ analogous to \eqref{eq:c} except that
$A$ and $B$ range over all pairs of $n\times\cdots\times n$
hypercubic grids labeled by the same set $S$ of $n^d$ distinct symbols.
Then $c_p^2(n) = c_p(n)$ for all $p=1,2,\ldots,\infty$.
Theorem~\ref{T2} in the following
is a straightforward extension to Theorem~\ref{T1}:

\begin{theorem}\label{T2}
For any integers $d \ge 2$ and $n \ge 2$,
$2\big\lfloor \sqrt{n/3} \,\big\rfloor \le c_\infty^d(n)
	\le \big\lceil \sqrt{n-1} \,\big\rceil
		+ \big\lfloor \sqrt{n-1} \,\big\rfloor$.
Consequently, for any integers $d \ge 2$, $n \ge 2$, and $p \ge 1$,
$2\big\lfloor \sqrt{n/3} \,\big\rfloor \le c_p^d(n)
	\le d^{1/p}\big( \big\lceil \sqrt{n-1} \,\big\rceil
		+ \big\lfloor \sqrt{n-1} \,\big\rfloor \big)$.
\end{theorem}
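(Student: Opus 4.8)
The plan is to reduce Theorem~\ref{T2} to its $p=\infty$ special case and to the planar Theorem~\ref{T1}. First I would record the elementary norm inequalities $\|\mathbf x\|_\infty \le \|\mathbf x\|_p \le d^{1/p}\|\mathbf x\|_\infty$ for $\mathbf x \in \RR^d$. The left inequality gives $\dist_p(A,s,t)+\dist_p(B,s,t) \ge \dist_\infty(A,s,t)+\dist_\infty(B,s,t)$ for every pair of grids and every $s,t$, so a pair of grids witnessing the $L_\infty$ lower bound also witnesses the $L_p$ lower bound, i.e.\ $c_p^d(n) \ge c_\infty^d(n)$. The right inequality gives $\dist_p(A,s,t)+\dist_p(B,s,t) \le d^{1/p}(\dist_\infty(A,s,t)+\dist_\infty(B,s,t))$ for the same $s,t$, so the pair $s,t$ realizing the minimum in the $L_\infty$ upper bound for a given pair of grids also has small $L_p$ combined distance, whence $c_p^d(n) \le d^{1/p} c_\infty^d(n)$. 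Thus it suffices to prove $2\lfloor\sqrt{n/3}\rfloor \le c_\infty^d(n) \le \lceil\sqrt{n-1}\rceil+\lfloor\sqrt{n-1}\rfloor$, and the only thing to check is that each half of the argument behind Theorem~\ref{T1} survives the passage from dimension $2$ to dimension $d$.

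For the upper bound, the argument of Theorem~\ref{T1} is a packing estimate that becomes dimension-free once one takes $d$-th roots in place of square roots. Suppose a pair $A,B$ of $n\times\cdots\times n$ grids has minimum combined $L_\infty$ distance $c$, and let $r<c$ be an integer parameter. The cells of $A$ with all coordinates in $\{0,\dots,r\}$ form a set of exactly $(r+1)^d$ cells that are pairwise at $L_\infty$ distance at most $r$ in $A$, hence pairwise at $L_\infty$ distance at least $c-r$ in $B$. Any subset of $\{0,\dots,n-1\}^d$ that is pairwise at $L_\infty$ distance at least $\delta$ injects into $\{0,\dots,\lceil n/\delta\rceil-1\}^d$ under the coordinatewise map $\mathbf x \mapsto \lfloor \mathbf x/\delta\rfloor$, so it has size at most $\lceil n/\delta\rceil^d$. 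Combining, $(r+1)^d \le \lceil n/(c-r)\rceil^d$; taking $d$-th roots reduces this to $r+1 \le \lceil n/(c-r)\rceil$, which is precisely the $d=2$ case, and optimizing over integer $r$ yields $c \le \lceil\sqrt{n-1}\rceil+\lfloor\sqrt{n-1}\rfloor$ exactly as in Theorem~\ref{T1}.

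For the lower bound I would reuse the relabeling from the proof of Theorem~\ref{T1}. When $d$ is even there is a clean shortcut: by Theorem~\ref{T1} there is a bijection $\sigma$ of $\{0,\dots,n-1\}^2$ with $\dist_\infty(u,v)+\dist_\infty(\sigma u,\sigma v)\ge 2\lfloor\sqrt{n/3}\rfloor$ for all $u\neq v$; partition the $d$ coordinates into $d/2$ consecutive pairs and let the relabeling between the two $d$-grids act as $\sigma$ independently within each pair. For distinct cells $\mathbf x,\mathbf y$ pick a pair $\ell$ on which they differ; since the $L_\infty$ distance over all coordinates dominates the $L_\infty$ distance over the coordinates of pair $\ell$, both in $A$ and in $B$, the combined distance of $\mathbf x,\mathbf y$ is at least the combined distance of their $\ell$-th pair under $\sigma$, i.e.\ at least $2\lfloor\sqrt{n/3}\rfloor$. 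For odd $d$ this leaves one coordinate uncovered, and since a single coordinate cannot by itself realize combined distance $2\lfloor\sqrt{n/3}\rfloor$ one cannot just append a one-dimensional gadget; instead I would observe that the explicit relabeling in the proof of Theorem~\ref{T1} (built on a block structure of side $\lfloor\sqrt{n/3}\rfloor$) is defined coordinate by coordinate in a way that makes sense for every $d$, and that its case analysis---small displacement within one block, block-aligned displacement, large displacement---carries over verbatim to dimension $d$, specializing to Theorem~\ref{T1} at $d=2$.

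The main obstacle is exactly this last step. For even $d$ the reduction to the plane is immediate and needs nothing new, but for odd $d$ one must re-run, in dimension $d$, the one part of the planar construction that really does work: the treatment of block-aligned displacements, i.e.\ pairs of cells that are close in $A$ only because they occupy the same position in adjacent blocks. Handling those pairs without losing a constant factor is what pins down the constant $\sqrt{1/3}$, and it is the only place where the $d$-dimensional proof is not a mechanical transcription of the planar one; everything else---the two norm reductions, the packing bound, the even-$d$ product construction, and the boundary bookkeeping when $\lfloor\sqrt{n/3}\rfloor$ does not divide $n$---is routine.
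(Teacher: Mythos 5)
Your proposal is correct and, in the end, follows the paper's route: reduce the $L_p$ bounds to the $L_\infty$ case via the norm inequalities, generalize the planar packing argument for the upper bound, and reuse the planar coordinate-wise relabeling for the lower bound. Two remarks. First, your upper-bound step is phrased as a counting/pigeonhole injection rather than the paper's volume argument, but after taking $d$-th roots it yields the same inequality $r(c-r)\le n-1$ and the same optimization, so the difference is cosmetic. Second, and more importantly, the ``main obstacle'' you flag in the lower bound is not actually there. The paper's construction colors each cell $(x_1,\dots,x_d)$ by $(x_1 \bmod k,\dots,x_d \bmod k)$ and moves each coordinate independently by the same one-dimensional map; its entire case analysis takes place in a single coordinate. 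Given two distinct cells, either all coordinates agree modulo $k$ (whence both $L_\infty$ distances are at least $k$, because the relabeling is a color-preserving bijection), or some single coordinate $\ell$ has differing residues, and the paper's nine-case computation shows $|x_2-x_1|+|x'_2-x'_1|\ge 2k$ in that one coordinate, which lower-bounds the combined $L_\infty$ distance since the $L_\infty$ norm dominates every individual coordinate. Nothing in that computation mentions $d$, so the transcription to dimension $d$ is literally mechanical for every $d$, odd or even; there is no separate ``block-aligned'' case to re-run and no risk of losing the constant. Your even-$d$ shortcut (acting by the planar permutation independently on $d/2$ pairs of coordinates) is a valid alternative for that subcase, but it is unnecessary given the above.
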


Our proof for the lower bound is constructive and,
in conjunction with the upper bound,
yields a simple linear-time constant-factor approximation algorithm
for the optimization problem of maximizing the minimum combined distance
between any two symbols in two grids.

\section{Lower Bound}

In this section, we prove the lower bound
$c_\infty(n) \ge 2\big\lfloor \sqrt{n/3} \,\big\rfloor$
in Theorem~\ref{T1}.
For convenience, let
$$
S=\{ (x,y) \mid 0 \le x,y \le n-1 \}
$$
be the set of center coordinates of the grid cells of $A$,
and label each cell of $A$ by its center coordinates.
Let
$$
C=\{ (i,j) \mid 0 \le i,j \le k-1 \}
$$
be a set of $k^2$ colors,
where $k = \Theta(\sqrt n)$ is a positive integer to be specified.
To prove the lower bound,
we will construct $B$ from $A$ by moving cells in the same grid
such that the combined distance between any two symbols in $A$ and $B$
is $\Omega(k)$.

\subsection{A Special Case}

We first consider the special case that $n = k^2$ for some integer $k \ge 2$.
Assign a color $(i,j)$ to each cell $(x,y)$ such that
$i = x \bmod k$ and $j = y \bmod k$.
To transform $A$ into $B$,
we move
each cell $(x,y)$ of color $(i,j)$
to a cell $(x',y')$ of the same color $(i,j)$ such that
\begin{align}
x' &= \left\{
\begin{array}{ll}
x + i k,
	& \textup{if }\; x + i k \le n - 1
\\
x + i k - k^2,
	& \textup{if }\; x + i k > n - 1
\end{array}
\right.
\\
y' &= \left\{
\begin{array}{ll}
y + j k,
	& \textup{if }\; y + j k \le n - 1
\\
y + j k - k^2,
	& \textup{if }\; y + j k > n - 1
\end{array}
\right.
\end{align}
Then each cell in $A$ is moved to a distinct cell in $B$.
The cells of color $(0,0)$ remain at the same positions in the grid.
We refer to Figure~\ref{fig:9x9} for an example.

\begin{figure}[htbp]
\centering
\hspace*{\stretch1}
\begin{minipage}[t]{0.45\linewidth}
\centering
\resizebox{\linewidth}{!}{\includegraphics{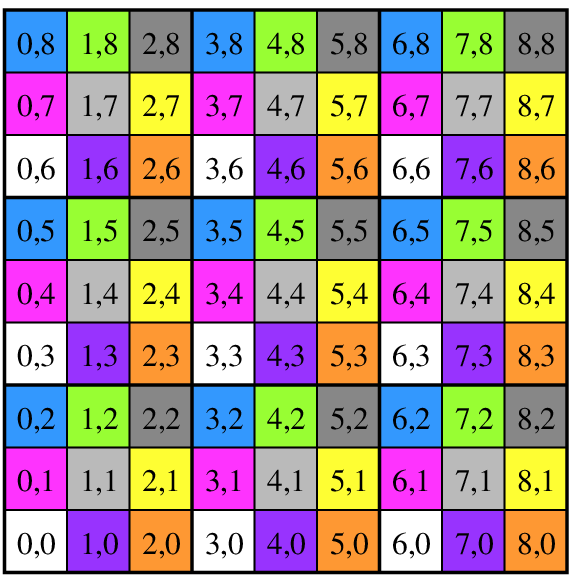}}\\\smallskip$A$
\end{minipage}
\hspace*{\stretch1}
\begin{minipage}[t]{0.45\linewidth}
\centering
\resizebox{\linewidth}{!}{\includegraphics{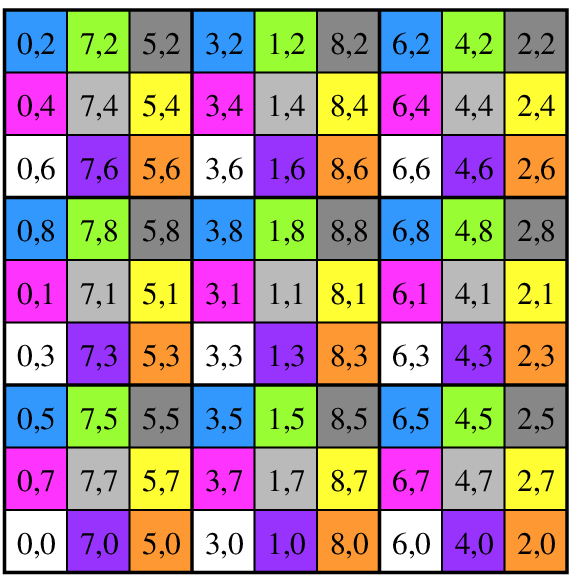}}\\\smallskip$B$
\end{minipage}
\hspace*{\stretch1}
\caption{Two grids $A$ and $B$ for $n = 9$ and $k = 3$.}
\label{fig:9x9}
\end{figure}

Consider any
two cells $(x_1, y_1)$ and $(x_2, y_2)$ in $A$
that are moved to
two cells $(x'_1, y'_1)$ and $(x'_2, y'_2)$ in $B$.
The combined $L_\infty$ distance between the corresponding two symbols
$(x_1, y_1)$ and $(x_2, y_2)$ is
$$
\max\{ |x_2 - x_1|, |y_2 - y_1| \} + \max\{ |x'_2 - x'_1| + |y'_2 - y'_1| \}.
$$
We will show that this combined distance is at least $k$.
Let
$$
i_1 = x_1 \bmod k,
\quad
i_2 = x_2 \bmod k,
\quad
j_1 = y_1 \bmod k,
\quad
j_2 = y_2 \bmod k.
$$
If $(i_1,j_1) = (i_2,j_2)$,
then the combined distance is at least $2k$
because the $L_\infty$ distance between any two cells of the same color
is at least $k$.
It remains to show that the combined distance is at least $k$
even if $(i_1,j_1) \neq (i_2,j_2)$.
Assume without loss of generality that $i_1 \neq i_2$.
It suffices to show that $|x_2 - x_1| + |x'_2 - x'_1| \ge k$.
Consider four cases:

\begin{enumerate}

\item
$x'_1 = x_1 + i_1 k$
and
$x'_2 = x_2 + i_2 k$.
\begin{align*}
|x'_2 - x'_1| + |x_2 - x_1|
&= | (x_2 + i_2 k) - (x_1 + i_1 k) | + |x_2 - x_1|
\\
&\ge | (x_2 + i_2 k) - (x_1 + i_1 k) - (x_2 - x_1) |
\\
&= |i_2 - i_1| \cdot k.
\end{align*}

\item
$x'_1 = x_1 + i_1 k - k^2$
and
$x'_2 = x_2 + i_2 k - k^2$.
\begin{align*}
|x'_2 - x'_1| + |x_2 - x_1|
&= | (x_2 + i_2 k - k^2) - (x_1 + i_1 k - k^2) | + |x_2 - x_1|
\\
&\ge | (x_2 + i_2 k - k^2) - (x_1 + i_1 k - k^2) - (x_2 - x_1) |
\\
&= |i_2 - i_1| \cdot k.
\end{align*}

\item
$x'_1 = x_1 + i_1 k$
and
$x'_2 = x_2 + i_2 k - k^2$.
\begin{align*}
|x'_2 - x'_1| + |x_2 - x_1|
&= | (x_2 + i_2 k - k^2) - (x_1 + i_1 k) | + |x_2 - x_1|
\\
&\ge | (x_2 + i_2 k - k^2) - (x_1 + i_1 k) - (x_2 - x_1) |
\\
&= |i_2 - i_1 - k| \cdot k.
\end{align*}

\item
$x'_1 = x_1 + i_1 k - k^2$
and
$x'_2 = x_2 + i_2 k$.
\begin{align*}
|x'_2 - x'_1| + |x_2 - x_1|
&= | (x_2 + i_2 k) - (x_1 + i_1 k - k^2) | + |x_2 - x_1|
\\
&\ge | (x_2 + i_2 k) - (x_1 + i_1 k - k^2) - (x_2 - x_1) |
\\
&= |i_2 - i_1 + k| \cdot k.
\end{align*}

\end{enumerate}

Recall that $0 \le i_1,i_2 \le k-1$ and $i_1 \neq i_2$.
Thus $1 \le |i_2 - i_1| \le k-1$.
This implies that
the two values $|i_2 - i_1 - k|$ and $|i_2 - i_1 + k|$
are both at least $1$.
In summary,
we have
$|x'_2 - x'_1| + |x_2 - x_1| \ge k$
in all four cases.

\subsection{The General Case}

Let $k$ be the largest integer such that
$3k \le \lceil n/k \rceil$;
we will show later that
$\big\lfloor \sqrt{n/3} \,\big\rfloor
\le k \le
\big\lceil \sqrt{n/3} \,\big\rceil$.
Again assign a color $(i,j)$ to each cell $(x,y)$ such that
$i = x \bmod k$ and $j = y \bmod k$.
To transform $A$ into $B$,
we move
each cell $(x,y)$ of color $(i,j)$
to a cell $(x',y')$ of the same color $(i,j)$ such that
\begin{align}
x' &= \left\{
\begin{array}{ll}
x + 3 i k,
	& \textup{if }\; x + 3 i k \le n - 1
\\
x + 3 i k - \lceil n/k \rceil k,
	& \textup{if }\; x + 3 i k > n - 1 \;\textup{ and }\; i \le (n - 1) \bmod k
\\
x + 3 i k - \lceil n/k \rceil k + k,
	& \textup{if }\; x + 3 i k > n - 1 \;\textup{ and }\; i > (n - 1) \bmod k
\end{array}
\right.
\\
y' &= \left\{
\begin{array}{ll}
y + 3 j k,
	& \textup{if }\; y + 3 j k \le n - 1
\\
y + 3 j k - \lceil n/k \rceil k,
	& \textup{if }\; y + 3 j k > n - 1 \;\textup{ and }\; j \le (n - 1) \bmod k
\\
y + 3 j k - \lceil n/k \rceil k + k,
	& \textup{if }\; y + 3 j k > n - 1 \;\textup{ and }\; j > (n - 1) \bmod k
\end{array}
\right.
\end{align}
Then each cell in $A$ is moved to a distinct cell in $B$.
The cells of color $(0,0)$ remain at the same positions in the grid.
We refer to Figure~\ref{fig:17x17} for an example.

\begin{figure}[htbp]
\centering
\resizebox{0.63\linewidth}{!}{\includegraphics{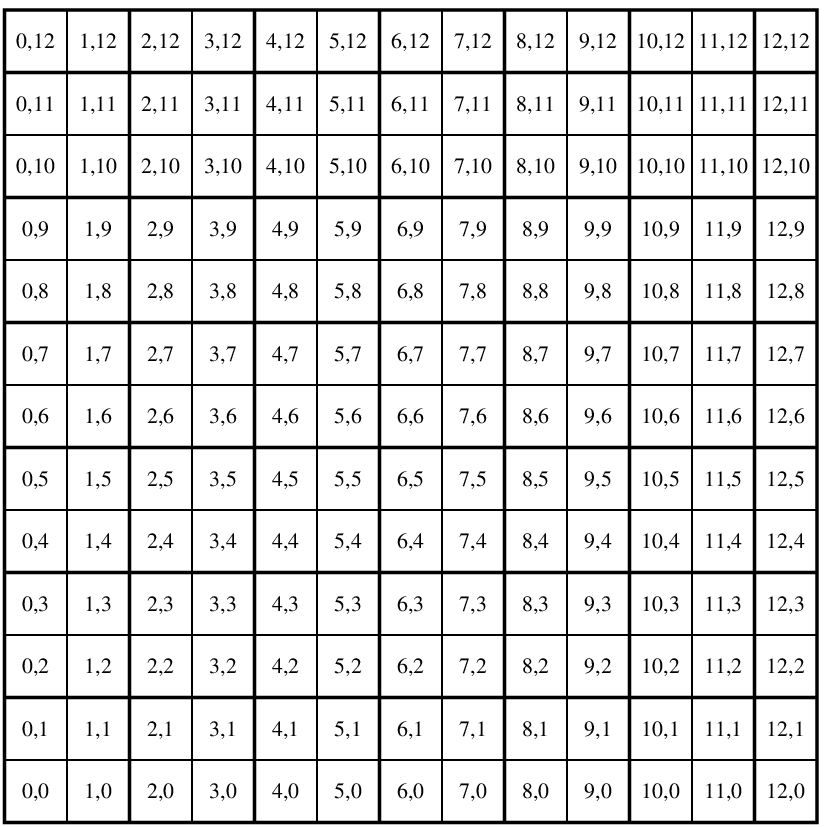}} $A$%
\bigskip\\
\resizebox{0.63\linewidth}{!}{\includegraphics{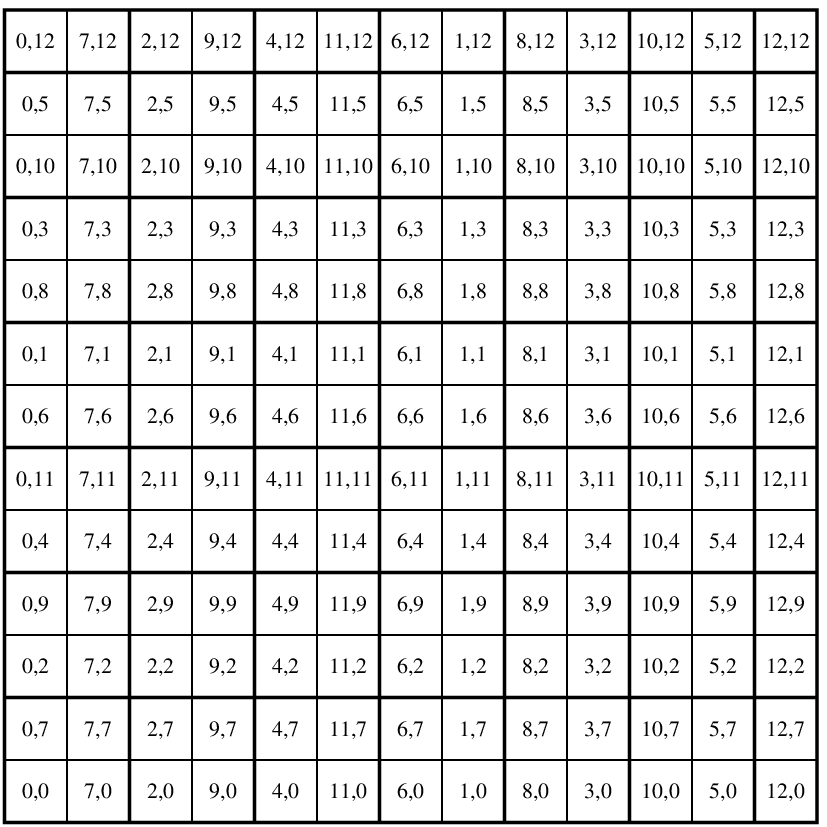}} $B$
\caption{Two grids $A$ and $B$ for $n = 13$ and $k = 2$.}
\label{fig:17x17}
\end{figure}

Consider any
two cells $(x_1, y_1)$ and $(x_2, y_2)$ in $A$
that are moved to
two cells $(x'_1, y'_1)$ and $(x'_2, y'_2)$ in $B$.
The combined $L_\infty$ distance between the corresponding two symbols
$(x_1, y_1)$ and $(x_2, y_2)$ is
$$
\max\{ |x_2 - x_1|, |y_2 - y_1| \} + \max\{ |x'_2 - x'_1| + |y'_2 - y'_1| \}.
$$
We will show that this combined distance is at least $2k$.
Let
$$
i_1 = x_1 \bmod k,
\quad
i_2 = x_2 \bmod k,
\quad
j_1 = y_1 \bmod k,
\quad
j_2 = y_2 \bmod k.
$$
If $(i_1,j_1) = (i_2,j_2)$,
then the combined distance is at least $2k$
because the $L_\infty$ distance between any two cells of the same color
is at least $k$.
It remains to show that the combined distance is at least $k$
even if $(i_1,j_1) \neq (i_2,j_2)$.
Assume without loss of generality that $i_1 \neq i_2$.
It suffices to show that $|x_2 - x_1| + |x'_2 - x'_1| \ge k$.
Consider nine cases:

\begin{enumerate}

\item
$x'_1 = x_1 + 3 i_1 k$
and
$x'_2 = x_2 + 3 i_2 k$.
\begin{align*}
|x'_2 - x'_1| + |x_2 - x_1|
&= | (x_2 + 3 i_2 k) - (x_1 + 3i_1 k) | + |x_2 - x_1|
\\
&\ge | (x_2 + 3 i_2 k) - (x_1 + 3i_1 k) - (x_2 - x_1) |
\\
&= |3(i_2 - i_1)| \cdot k.
\end{align*}

\item
$x'_1 = x_1 + 3 i_1 k - \lceil n/k \rceil k$
and
$x'_2 = x_2 + 3 i_2 k - \lceil n/k \rceil k$.
\begin{align*}
|x'_2 - x'_1| + |x_2 - x_1|
&= | (x_2 + 3i_2 k - \lceil n/k \rceil k)
	- (x_1 + 3i_1 k - \lceil n/k \rceil k) | + |x_2 - x_1|
\\
&\ge | (x_2 + 3i_2 k - \lceil n/k \rceil k)
	- (x_1 + 3i_1 k - \lceil n/k \rceil k) - (x_2 - x_1) |
\\
&= |3(i_2 - i_1)| \cdot k.
\end{align*}

\item
$x'_1 = x_1 + 3 i_1 k$
and
$x'_2 = x_2 + 3 i_2 k - \lceil n/k \rceil k$.
\begin{align*}
|x'_2 - x'_1| + |x_2 - x_1|
&= | (x_2 + 3i_2 k - \lceil n/k \rceil k) - (x_1 + 3i_1 k) | + |x_2 - x_1|
\\
&\ge | (x_2 + 3i_2 k - \lceil n/k \rceil k) - (x_1 + 3i_1 k) - (x_2 - x_1) |
\\
&= |3(i_2 - i_1) - \lceil n/k \rceil| \cdot k.
\end{align*}

\item
$x'_1 = x_1 + 3 i_1 k - \lceil n/k \rceil k$
and
$x'_2 = x_2 + 3 i_2 k$.
\begin{align*}
|x'_2 - x'_1| + |x_2 - x_1|
&= | (x_2 + 3i_2 k) - (x_1 + 3i_1 k - \lceil n/k \rceil k) | + |x_2 - x_1|
\\
&\ge | (x_2 + 3i_2 k) - (x_1 + 3i_1 k - \lceil n/k \rceil k) - (x_2 - x_1) |
\\
&= |3(i_2 - i_1) + \lceil n/k \rceil| \cdot k.
\end{align*}

\item
$x'_1 = x_1 + 3 i_1 k - \lceil n/k \rceil k + k$
and
$x'_2 = x_2 + 3 i_2 k - \lceil n/k \rceil k + k$.
\begin{align*}
|x'_2 - x'_1| + |x_2 - x_1|
&= | (x_2 + 3i_2 k - \lceil n/k \rceil k + k)
	- (x_1 + 3i_1 k - \lceil n/k \rceil k + k) | + |x_2 - x_1|
\\
&\ge | (x_2 + 3i_2 k - \lceil n/k \rceil k + k)
	- (x_1 + 3i_1 k - \lceil n/k \rceil k + k) - (x_2 - x_1) |
\\
&= |3(i_2 - i_1)| \cdot k.
\end{align*}

\item
$x'_1 = x_1 + 3 i_1 k$
and
$x'_2 = x_2 + 3 i_2 k - \lceil n/k \rceil k + k$.
\begin{align*}
|x'_2 - x'_1| + |x_2 - x_1|
&= | (x_2 + 3i_2 k - \lceil n/k \rceil k + k) - (x_1 + 3i_1 k) | + |x_2 - x_1|
\\
&\ge | (x_2 + 3i_2 k - \lceil n/k \rceil k + k) - (x_1 + 3i_1 k) - (x_2 - x_1) |
\\
&= |3(i_2 - i_1) - (\lceil n/k \rceil - 1)| \cdot k.
\end{align*}

\item
$x'_1 = x_1 + 3 i_1 k - \lceil n/k \rceil k + k$
and
$x'_2 = x_2 + 3 i_2 k$.
\begin{align*}
|x'_2 - x'_1| + |x_2 - x_1|
&= | (x_2 + 3i_2 k) - (x_1 + 3i_1 k - \lceil n/k \rceil k + k) | + |x_2 - x_1|
\\
&\ge | (x_2 + 3i_2 k) - (x_1 + 3i_1 k - \lceil n/k \rceil k + k) - (x_2 - x_1) |
\\
&= |3(i_2 - i_1) + (\lceil n/k \rceil - 1)| \cdot k.
\end{align*}

\item
$x'_1 = x_1 + 3 i_1 k - \lceil n/k \rceil k$
and
$x'_2 = x_2 + 3 i_2 k - \lceil n/k \rceil k + k$.
\begin{align*}
|x'_2 - x'_1| + |x_2 - x_1|
&= | (x_2 + 3i_2 k - \lceil n/k \rceil k + k)
	- (x_1 + 3i_1 k - \lceil n/k \rceil k) | + |x_2 - x_1|
\\
&\ge | (x_2 + 3i_2 k - \lceil n/k \rceil k + k)
	- (x_1 + 3i_1 k - \lceil n/k \rceil k) - (x_2 - x_1) |
\\
&= |3(i_2 - i_1) + 1| \cdot k.
\end{align*}

\item
$x'_1 = x_1 + 3 i_1 k - \lceil n/k \rceil k + k$
and
$x'_2 = x_2 + 3 i_2 k - \lceil n/k \rceil k$.
\begin{align*}
|x'_2 - x'_1| + |x_2 - x_1|
&= | (x_2 + 3i_2 k - \lceil n/k \rceil k)
	- (x_1 + 3i_1 k - \lceil n/k \rceil k + k) | + |x_2 - x_1|
\\
&\ge | (x_2 + 3i_2 k - \lceil n/k \rceil k)
	- (x_1 + 3i_1 k - \lceil n/k \rceil k + k) - (x_2 - x_1) |
\\
&= |3(i_2 - i_1) - 1| \cdot k.
\end{align*}

\end{enumerate}

Note that the first four cases here are similar to those for the special case
in the previous subsection.
Recall that $0 \le i_1,i_2 \le k-1$ and $i_1 \neq i_2$.
Thus $3 \le |3(i_2 - i_1)| \le 3(k-1)$.
Since $3(k-1) = 3k - 3 \le \lceil n/k \rceil - 3$ by our choice of $k$,
it follows that the two values
$|3(i_2 - i_1) - \lceil n/k \rceil|$
and
$|3(i_2 - i_1) + \lceil n/k \rceil|$
are both at least $3$.
Then
the four values
$|3(i_2 - i_1) - (\lceil n/k \rceil - 1)|$,
$|3(i_2 - i_1) + (\lceil n/k \rceil - 1)|$,
$|3(i_2 - i_1) + 1|$,
and
$|3(i_2 - i_1) - 1|$
are all at least $2$.
In summary, we have
$|x'_2 - x'_1| + |x_2 - x_1| \ge 2k$
in all nine cases.

The following lemma gives an estimate of $k$:

\begin{lemma}
Let $n$ be an integer such that $n \ge 2$.
Let $k$ be the largest integer such that $3k \le \lceil n/k \rceil$.
Then
$\big\lfloor \sqrt{n/3} \,\big\rfloor
\le k \le
\big\lceil \sqrt{n/3} \,\big\rceil$.
\end{lemma}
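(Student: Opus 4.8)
\medskip\noindent\textbf{Proof idea.}
The plan is to view the condition defining $k$ as a predicate $P(m)$ on positive integers, namely the predicate defined by $3m \le \lceil n/m \rceil$, and to bound $k$ from below and from above separately. First I would record the monotonicity of $P$: if $1 \le m' \le m$ and $P(m)$ holds, then so does $P(m')$, since $3m' \le 3m \le \lceil n/m \rceil \le \lceil n/m' \rceil$. Hence the positive integers satisfying $P$ form an initial segment $\{1,\dots,k\}$; for $n \ge 3$ it is nonempty because $P(1)$ just reads $3 \le n$, so $k \ge 1$ is well defined, while the corner case $n = 2$ (where $\big\lfloor \sqrt{n/3}\,\big\rfloor = 0$ and the lower bound in Theorem~\ref{T1} is vacuous) is dismissed in one line. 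Because of monotonicity, it then suffices to check that $P$ holds at $m = \big\lfloor \sqrt{n/3}\,\big\rfloor$ and fails at $m = \big\lceil \sqrt{n/3}\,\big\rceil + 1$.

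For the lower bound, set $m = \big\lfloor \sqrt{n/3}\,\big\rfloor$; if $m \ge 1$ then $m^2 \le n/3$, hence $3m \le n/m \le \lceil n/m \rceil$, so $P(m)$ holds and $k \ge m$. For the upper bound, set $M = \big\lceil \sqrt{n/3}\,\big\rceil + 1$, so that $M \ge \sqrt{n/3} + 1 > 1$. Writing $q = \sqrt{n/3}$, so that $3q^2 = n$, and using that $t \mapsto 3t^2 - t$ is increasing for $t \ge 1$, I obtain $3M^2 - M \ge 3(q+1)^2 - (q+1) = 3q^2 + 5q + 2 = n + 5q + 2 \ge n$. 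Dividing through by $M > 0$ gives $3M \ge n/M + 1 > \lceil n/M \rceil$, so $P(M)$ fails, and monotonicity forces $k < M$, that is, $k \le \big\lceil \sqrt{n/3}\,\big\rceil$.

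The only step that calls for any care is this last one in the upper bound, where one must pass between the integer $\lceil n/M \rceil$ and the real number $n/M$; the elementary fact $\lceil x \rceil < x + 1$, combined with the clean identity $3(q+1)^2 - (q+1) = n + 5q + 2$, makes it routine. Everything else --- the monotonicity observation, the short lower-bound computation, and the $n = 2$ case --- is pure bookkeeping, so I do not anticipate a real obstacle.
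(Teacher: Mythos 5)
Your proof is correct and takes essentially the same route as the paper: both arguments bound $k$ by testing the defining inequality $3m \le \lceil n/m \rceil$ at the integers adjacent to $\sqrt{n/3}$. You are in fact somewhat more careful than the paper --- you make the monotonicity of the predicate explicit (the paper's step from $3\big\lceil \sqrt{n/3}\,\big\rceil \ge \big\lceil n/\big\lceil \sqrt{n/3}\,\big\rceil \big\rceil$ to $k \le \big\lceil \sqrt{n/3}\,\big\rceil$ tacitly relies on it) and you flag the degenerate case $n=2$, where $\big\lfloor \sqrt{n/3}\,\big\rfloor = 0$ and the paper's own lower-bound computation divides by zero.
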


\begin{proof}
We have $k \ge \big\lfloor \sqrt{n/3} \,\big\rfloor$ because
$$
3\big\lfloor \sqrt{n/3} \,\big\rfloor
\le 3\sqrt{n/3}
= \frac{n}{\sqrt{n/3}}
\le \frac{n}{\big\lfloor \sqrt{n/3} \,\big\rfloor}
\le \left\lceil \frac{n}{\big\lfloor \sqrt{n/3} \,\big\rfloor} \right\rceil.
$$

On the other hand,
we have $k \le \lceil \sqrt{n/3} \rceil$ because
$$
3\big\lceil \sqrt{n/3} \,\big\rceil
\ge 3\sqrt{n/3}
= \frac{n}{\sqrt{n/3}}
\ge \frac{n}{\big\lceil \sqrt{n/3} \,\big\rceil}
\implies
3\big\lceil \sqrt{n/3} \,\big\rceil
\ge \left\lceil \frac{n}{\big\lceil \sqrt{n/3} \,\big\rceil} \right\rceil.
\qedhere
$$
\end{proof}

This completes the proof of the lower bound
$c_\infty(n) \ge 2\big\lfloor \sqrt{n/3} \,\big\rfloor$
in Theorem~\ref{T1}.

\section{Upper Bound}

In this section, we prove the upper bound
$c_\infty(n) \le \big\lceil \sqrt{n-1} \,\big\rceil
		+ \big\lfloor \sqrt{n-1} \,\big\rfloor$
in Theorem~\ref{T1}.
Let $A$ and $B$ be two arbitrary $n\times n$ square grids
labeled by the same set $S$ of $n^2$ symbols.
We will show that there are two symbols in $S$
such that the combined $L_\infty$ distance between them
in the two grids $A$ and $B$ is at most
$\big\lceil \sqrt{n-1} \,\big\rceil
		+ \big\lfloor \sqrt{n-1} \,\big\rfloor$.

\begin{figure}[htbp]
\centering
\begin{minipage}[t]{0.6\linewidth}
\resizebox{\linewidth}{!}{\includegraphics{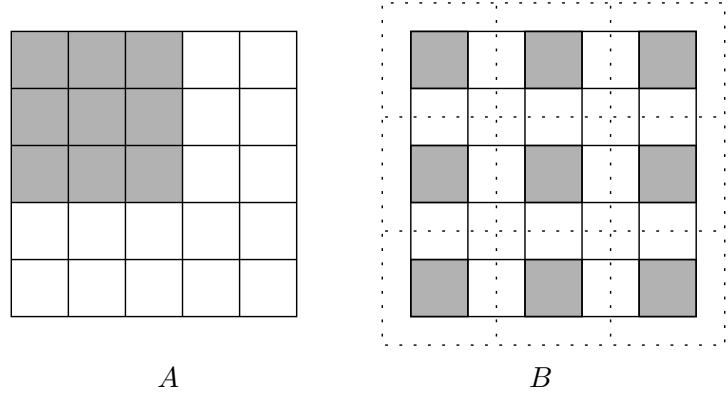}}\\
\hspace*{\stretch1}$A$\hspace*{\stretch2}$B$\hspace*{\stretch1}
\end{minipage}
\caption{Two grids $A$ and $B$ for $n = 5$ and $u = v = 2$.
The grid cells of $U$ and $V$ are shaded.
The $v\times v$ ($2\times 2$) squares that cover the cells in $V$ are dotted;
in this example, they tile the extended square of side $n - 1 + v = 6$.}
\label{fig:uv}
\end{figure}

Let $U$ be the set of cells in an arbitrary $(u+1)\times (u+1)$ sub-grid
of the $n\times n$ grid $A$,
where $u$ is an integer to be specified, $1 \le u \le n - 1$.
Then the $L_\infty$ distance between any two cells in $U$ is at most $u$.
Let $V$ be the set of cells in $B$ that are labeled by the same symbols
that label the cells in $U$.
Let $v$ be the minimum $L_\infty$ distance between any two cells in $V$.
For each cell in $V$, cover the cell by an axis-parallel square of side $v$
that is concentric with the cell.
Then these $v\times v$ squares are pairwise interior-disjoint,
and are all contained in an extended axis-parallel square of side
$n - 1 + v$
that is concentric with the grid $B$.
By an area argument, we have
\begin{equation}\label{eq:area}
(u+1)^2 \cdot v^2 \le (n - 1 + v)^2,
\end{equation}
which simplifies to
\begin{equation}\label{eq:uv}
uv \le n - 1.
\end{equation}
Now choose
$u = \big\lceil \sqrt{n-1} \,\big\rceil$.
It follows that
$v \le \big\lfloor \sqrt{n-1} \,\big\rfloor$.
Consider any two cells of $L_\infty$ distance $v$ in $B$.
The combined distance between the corresponding two symbols
is at most
$$
u + v \le \big\lceil \sqrt{n-1} \,\big\rceil
	+ \big\lfloor \sqrt{n-1} \,\big\rfloor.
$$
This completes the proof of the upper bound
$c_\infty(n) \le \big\lceil \sqrt{n-1} \,\big\rceil
		+ \big\lfloor \sqrt{n-1} \,\big\rfloor$
in Theorem~\ref{T1}.

\bigskip
Note that for the lower bound,
our construction is symmetric for all $d$ dimensions
and our case analysis is restricted to only one dimension.
Also note that for the upper bound,
the area argument in \eqref{eq:area} can be generalized to
a volume argument in higher dimensions, which still yields
the same inequality in \eqref{eq:uv}.
Thus we obtain the same bounds
$$
2 \big\lfloor \sqrt{n/3} \,\big\rfloor \le c_\infty^d(n)
	\le \big\lceil \sqrt{n-1} \,\big\rceil
		+ \big\lfloor \sqrt{n-1} \,\big\rfloor
$$
in Theorem~\ref{T2}.
The bounds on $c_p(n)$ and $c_p^d(n)$
in Theorem~\ref{T1} and Theorem~\ref{T2}
follow immediately because for any integer $p \ge 1$,
the $L_p$ distance between any two points in $\RR^d$
is at least the $L_\infty$ distance
and at most $d^{1/p}$ times the $L_\infty$ distance
between the two points.

\paragraph{Remark.}
After the submission of this manuscript,
the authors were informed by Joseph O'Rourke
that Vincent Pilaud, Nils Schweer, and Daria Schymura
had simultaneously and independently
obtained similar bounds $c_1(n) = \Theta(\sqrt n)$.

\end{document}